\newcommand{\Pclassbase}{{\sf\bf P}}
\newcommand{\Pclass}{\Pclassbase\xspace}
\newcommand{\NPclassbase}{{\sf\bf NP}}
\newcommand{\NPclass}{\NPclassbase\xspace}
\newcommand{\complete}{\text{-complete}}
\newcommand{\NPcomplete}{\NPclassbase\complete\xspace}
\newcommand{\NPC}{\NPcomplete}
\DeclareMathAlphabet{\mathscr}{OT1}{pzc}{m}{it}
\DeclareMathAlphabet{\mathfrit}{OT1}{cmfr}{m}{it}
\DeclareMathAlphabet{\mathccmit}{OT1}{ccm}{m}{it}
\DeclareMathAlphabet{\mathdj}{U}{msb}{m}{n} 
\DeclareMathAlphabet{\mathib}{OT1}{cmr}{bx}{it} 
\DeclareMathAlphabet{\mathbfit}{OT1}{cmr}{bx}{it}
\DeclareMathAlphabet{\mathbfss}{OT1}{cmss}{bx}{n}
\DeclareMathAlphabet{\mathsfm}{OT1}{cmss}{m}{n}
\DeclareMathOperator{\QQ}{\mathbb{Q}}
\DeclareMathOperator{\STD}{\mathrm{s}}
\DeclareMathOperator{\SHUFFLE}{\bullet}
\DeclareMathOperator{\BINTOPERM}{\mathrm{btp}}
\DeclareMathOperator{\PERMTOBIN}{\mathrm{ptb}}
\newcommand{\CrossingLR}{%
\begin{tikzpicture}[yscale=0.08,xscale=0.09,inner sep=.8pt,node distance=.25cm,>=latex']
\draw node [draw,circle,fill=black] (U1)               {};
\draw node [draw,circle,fill=black] [right of=U1] (U2) {};
\draw node [draw,circle,fill=black] [right of=U2] (U3) {};
\draw node [draw,circle,fill=black] [right of=U3] (U4) {};
\draw [->] (U1.north) .. controls ($ (U1.north) + (0,4) $) and ($ (U3.north) + (0,4) $) .. (U3.north);
\draw [<-] (U2.north) .. controls ($ (U2.north) + (0,4) $) and ($ (U4.north) + (0,4) $) .. (U4.north);
\end{tikzpicture}
}
\newcommand{\CrossingRL}{%
\begin{tikzpicture}[yscale=0.08,xscale=0.09,inner sep=.8pt,node distance=.25cm,>=latex']
\draw node [draw,circle,fill=black] (U1)               {};
\draw node [draw,circle,fill=black] [right of=U1] (U2) {};
\draw node [draw,circle,fill=black] [right of=U2] (U3) {};
\draw node [draw,circle,fill=black] [right of=U3] (U4) {};
\draw [<-] (U1.north) .. controls ($ (U1.north) + (0,4) $) and ($ (U3.north) + (0,4) $) .. (U3.north);
\draw [->] (U2.north) .. controls ($ (U2.north) + (0,4) $) and ($ (U4.north) + (0,4) $) .. (U4.north);
\end{tikzpicture}
}
\newcommand{\InclusionLL}{%
\begin{tikzpicture}[yscale=0.08,xscale=0.09,inner sep=.8pt,node distance=.25cm,>=latex']
\draw node [draw,circle,fill=black] (U1)               {};
\draw node [draw,circle,fill=black] [right of=U1] (U2) {};
\draw node [draw,circle,fill=black] [right of=U2] (U3) {};
\draw node [draw,circle,fill=black] [right of=U3] (U4) {};
+\draw [->] (U4.north) .. controls ($ (U4.north) + (0,4) $) and ($ (U1.north) + (0,4) $) .. (U1.north);
+\draw [->] (U3.north) .. controls ($ (U3.north) + (0,3) $) and ($ (U2.north) + (0,3) $) .. (U2.north);
\end{tikzpicture}
}
\newcommand{\InclusionLR}{%
\begin{tikzpicture}[yscale=0.08,xscale=0.09,inner sep=.8pt,node distance=.25cm,>=latex']
\draw node [draw,circle,fill=black] (U1)               {};
\draw node [draw,circle,fill=black] [right of=U1] (U2) {};
\draw node [draw,circle,fill=black] [right of=U2] (U3) {};
\draw node [draw,circle,fill=black] [right of=U3] (U4) {};
\draw [->] (U1.north) .. controls ($ (U1.north) + (0,4) $) and ($ (U4.north) + (0,4) $) .. (U4.north);
\draw [<-] (U2.north) .. controls ($ (U2.north) + (0,3) $) and ($ (U3.north) + (0,3) $) .. (U3.north);
\end{tikzpicture}
}
\newcommand{\InclusionRL}{%
\begin{tikzpicture}[yscale=0.08,xscale=0.09,inner sep=.8pt,node distance=.25cm,>=latex']
\draw node [draw,circle,fill=black] (U1)               {};
\draw node [draw,circle,fill=black] [right of=U1] (U2) {};
\draw node [draw,circle,fill=black] [right of=U2] (U3) {};
\draw node [draw,circle,fill=black] [right of=U3] (U4) {};
\draw [<-] (U1.north) .. controls ($ (U1.north) + (0,4) $) and ($ (U4.north) + (0,4) $) .. (U4.north);
\draw [->] (U2.north) .. controls ($ (U2.north) + (0,3) $) and ($ (U3.north) + (0,3) $) .. (U3.north);
\end{tikzpicture}
}
\newcommand{\InclusionRR}{%
\begin{tikzpicture}[yscale=0.08,xscale=0.09,inner sep=.8pt,node distance=.25cm,>=latex']
\draw node [draw,circle,fill=black] (U1)               {};
\draw node [draw,circle,fill=black] [right of=U1] (U2) {};
\draw node [draw,circle,fill=black] [right of=U2] (U3) {};
\draw node [draw,circle,fill=black] [right of=U3] (U4) {};
\draw [->] (U1.north) .. controls ($ (U1.north) + (0,4) $) and ($ (U4.north) + (0,4) $) .. (U4.north);
\draw [->] (U2.north) .. controls ($ (U2.north) + (0,3) $) and ($ (U3.north) + (0,3) $) .. (U3.north);
\end{tikzpicture}
}
\newcommand{\pushright}[1]{\ifmeasuring@#1\else\omit\hfill$\displaystyle#1$\fi\ignorespaces}
\newcommand{\pushleft}[1]{\ifmeasuring@#1\else\omit$\displaystyle#1$\hfill\fi\ignorespaces}
\begin{document}


\title{%
Unshuffling Permutations}%

\author{%
  Samuele Giraudo \and
  St\'ephane Vialette
}
\institute{%
  Universit\'e Paris-Est, LIGM (UMR 8049), CNRS, UPEM, ESIEE Paris, ENPC,
  F-77454, Marne-la-Vallée, France\\
  \email{samuele.giraudo@univ-mlv.fr} \\
  \email{vialette@univ-mlv.fr}
}
\date{\today}

\maketitle


\begin{abstract}
    A permutation is said to be a square if it can be obtained by
    shuffling two order-isomorphic patterns. The definition is intended
    to be the natural counterpart to the ordinary shuffle of words and
    languages. In this paper, we tackle the problem of recognizing square
    permutations from both the point of view of algebra and algorithms.
    On the one hand, we present some algebraic and combinatorial
    properties of the shuffle product of permutations. We follow an
    unusual line consisting in defining the shuffle of permutations by
    means of an unshuffling operator, known as a coproduct. This
    strategy allows to obtain easy proofs for algebraic and combinatorial
    properties of our shuffle product. We besides exhibit a bijection
    between square $(213,231)$-avoiding permutations and square binary
    words. On the other hand, by using a pattern avoidance criterion on
    oriented perfect matchings, we prove that recognizing square
    permutations is \NPC.
\end{abstract}


\section{Introduction}
\label{section:Introduction}
The {\em shuffle product}, denoted $\shuffle$, is a well-known operation on words
first defined by Eilenberg and Mac Lane~\cite{Eilenberg:MacLane:1953}.
Given three words $u$, $v_1$, and $v_2$, $u$ is said to be a \emph{shuffle}
of $v_1$ and $v_2$ if it can be formed by interleaving the letters from
$v_1$ and $v_2$ in a way that maintains the left-to-right ordering of the
letters from each word.
Besides purely combinatorial questions,
the shuffle product of words naturally leads to the following computational
problems:
\begin{enumerate}
    \item \label{item:problem_1}
    Given two words $v_1$ and $v_2$, compute the set $v_1 \shuffle v_2$.
    \item \label{item:problem_2}
    Given three words $u$, $v_1$, and $v_2$, decide if $u$ is a shuffle
    of $v_1$ and $v_2$.
    \item \label{item:problem_3}
    Given words $u$, $v_1$, \dots, $v_k$, decide if $u$ is in
    $v_1 \shuffle \dots \shuffle v_k$.
    \item \label{item:problem_4}
    Given a word $u$, decide if there is a word $v$ such that $u$ is
    in $v \shuffle v$.
\end{enumerate}
Even if these problems seem similar, they radically differ in terms
of time complexity. Let us now review some facts about these. In what follows,
$n$ denotes the size of $u$ and $m_i$ denotes the size of each $v_i$.
A solution to Problem~\ref{item:problem_1} can be computed in
$O\left((m_1 + m_2) \; \binom{m_1 + m_2}{m_1}\right)$
time~\cite{Spehner:TCS:1986}. An improvement and a generalization
of~Problem~\ref{item:problem_1} has been proposed
in~\cite{Allauzen:IGM:2000}, where it is proved that given words
$v_1$, \dots, $v_k$, the iterated shuffle
$v_1 \shuffle \dots \shuffle v_k$ can be computed in
$O\left(\binom{m_1 + \dots + m_k}{m_1, \dots, m_k}\right)$
time. Problem~\ref{item:problem_2} is in \Pclass; it is indeed a
classical textbook exercise to design an efficient dynamic programming
algorithm solving it. It can be tested in $O\left(n^2 / \log(n)\right)$
time~\cite{Leeuwen:Nivat:IPL:1982}. To the best of our knowledge, the
first $O(n^2)$ time algorithm for this problem appeared
in~\cite{Mansfield:DAM:1983}. This algorithm can easily be extended to
check in polynomial-time whether or not a word is in the shuffle of any
fixed number of given words. Nevertheless, Problem~\ref{item:problem_3}
is \NPC~\cite{Mansfield:DAM:1983,Warmuth:Haussler:JCSS:1984}. This
remains true even if the ground alphabet has size
$3$~\cite{Warmuth:Haussler:JCSS:1984}. Of particular interest,
it is shown in~\cite{Warmuth:Haussler:JCSS:1984} that
Problem~\ref{item:problem_3} remains \NPC even if all the words $v_i$,
$i \in [k]$, are identical, thereby proving that, for two words $u$ and
$v$, it is \NPC to decide whether or not $u$ is in the iterated shuffle
of $v$. Again, this remains true even if the ground alphabet has size $3$.
Let us now finally focus on Problem~\ref{item:problem_4}. It is shown
in~\cite{Buss:Soltys:2014,Rizzi:Vialette:CSR:2013} that it is \NPC to
decide if a word $u$ is a \emph{square} (w.r.t. the shuffle), that is
a word $u$ with the property that there exists a word $v$ such that $u$
is a shuffle of $v$ with itself. Hence, Problem~\ref{item:problem_4}
is \NPC.
\smallskip

This paper is intended to study a natural generalization of $\shuffle$,
denoted by $\SHUFFLE$, as a shuffle of permutations. Roughly speaking,
given three permutations $\pi$, $\sigma_1$, and $\sigma_2$, $\pi$ is
said to be a {\em shuffle} of $\sigma_1$ and $\sigma_2$ if $\pi$ (viewed
as a word) is a shuffle of two words that are order-isomorphic to
$\sigma_1$ and $\sigma_2$. This shuffle product was first introduced by
Vargas~\cite{Vargas:2014} under the name of {\em supershuffle}. Our
intention in this paper is to study this shuffle product of permutations
$\SHUFFLE$ both from a combinatorial and from a computational point of
view by focusing on {\em square} permutations, that are permutations
$\pi$ being in the shuffle of a permutation $\sigma$ with itself. Many
other shuffle products on permutations appear in the literature. For
instance, in~\cite{DHT:IJAC:2002}, the authors define the {\em convolution
product} and the {\em shifted shuffle product}. For this last product,
$\pi$ is a shuffle of $\sigma_1$ and $\sigma_2$ if $\pi$ is in the shuffle,
as words, of $\sigma_1$ and the word obtained by incrementing all the
letters of $\sigma_2$ by the size of $\sigma_1$. It is a simple exercise
to prove that, given three permutations $\pi$, $\sigma_1$, and $\sigma_2$,
deciding if $\pi$ is in the shifted shuffle of $\sigma_1$ and $\sigma_2$
is in~\Pclass.
\smallskip

This paper is organized as follows. In
Section~\ref{section:Shuffle product on permutations} we provide a
precise definition of $\SHUFFLE$. This definition passes through the
preliminary definition of an operator~$\Delta$, allowing to
{\em unshuffle} permutations. This operator is in fact a coproduct,
endowing the linear span of all permutations with a coalgebra structure
(see~\cite{Joni:Rota:1979} or~\cite{Grinberg:Reiner:2014} for the
definition of these algebraic structures). By duality, the unshuffling
operator $\Delta$ leads to the definition of our shuffle operation on
permutations. This approach has many advantages. First, some
combinatorial properties of $\SHUFFLE$ depend on properties of $\Delta$
and are more easy to prove on the coproduct side. Second, this way of
doing allows to obtain a clear description of the multiplicities of the
elements appearing in the shuffle of two permutations, which are worthy
of interest from a combinatorial point of view.
Section~\ref{section:Binary square words and permutations} is devoted to
showing that the problems related to the shuffle of words has links with
the shuffle of permutations. In particular, we show that binary words
that are square are in one-to-one correspondence with square
permutations avoiding some patterns
(Proposition~\ref{prop:bijection_binary_to_permutations_squares}). Next,
Section~\ref{section:Algebraic issues} presents some algebraic and
combinatorial properties of~$\SHUFFLE$. We show that $\SHUFFLE$ is
associative and commutative
(Proposition~\ref{prop:shuffle_associative_commutative}), and that if a
permutation is a square, its mirror, complement, and inverse are also
squares (Proposition~\ref{prop:square_stability}). Finally,
Section~\ref{section:Algorithmic issues} presents the most important result
of this paper: the fact that deciding if a permutation is a square is
\NPC (Proposition~\ref{proposition:hardness}). This result is obtained
by exhibiting a reduction from the pattern involvement
problem~\cite{Bose:Buss:Lubiw:1998} which is \NPC.


\section{Notations}
\label{section:Notations}

If $S$ is a finite set, the cardinality of $S$ is denoted by $|S|$,
and if $P$ and $Q$ are two disjoint sets, $P \sqcup Q$ denotes the
disjoint union of $P$ and $Q$. For any nonnegative integer $n$, $[n]$
is the set $\{1, \dots, n\}$.

We follow the usual terminology on words~\cite{ChoffrutKarhumaki1997}.
Let us recall here the most important ones. Let $u$ be a word. The
length of $u$ is denoted by $|u|$. The {\em empty word}, the only word
of null length, is denoted by $\epsilon$. We denote by $\widetilde{u}$
the {\em mirror image} of $u$, that is the word
$u_{|u|} u_{|u| - 1} \dots u_1$. If $P$ is a subset of $[|u|]$, $u_{|P}$
is the subword of $u$ consisting in the letters of $u$ at the positions
specified by the elements of $P$. If $u$ is a word of integers and $k$
is an integer, we denote by $u[k]$ the word obtained by incrementing by
$k$ all letters of $u$. The {\em shuffle} of two words $u$ and $v$ is
the set recursively defined by
$u \shuffle \epsilon = \{u\} = \epsilon \shuffle u$ and
$ua \shuffle vb = (u \shuffle vb)a \cup (ua \shuffle v)b$, were $a$ and
$b$ are letters. A word $u$ is a {\em square} if there exists a word $v$
such that $u$ belongs to $v \shuffle v$.

We denote by $S_n$ the set of permutations of size $n$ and by $S$ the
set of all permutations. In this paper, permutations of a size $n$ are
specified by words of length $n$ on the alphabet $[n]$ and without
multiple occurrence of a letter, so that all above definitions about
words remain valid on permutations. The only difference lies on the
fact that we shall denote by $\pi(i)$ (instead of $\pi_i$) the $i$-th
letter of any permutation $\pi$. For any nonnegative integer $n$, we
write $\nearrow_{n}$ (resp. $\searrow_{n}$) for the permutation
$1 2 \dots n$ (resp. $n\,(n-1) \dots 1$). If $\pi$ is a permutation of
$S_n$, we denote by $\bar \pi$ the {\em complement} of $\pi$, that is
the permutation satisfying $\bar \pi(i) = n - \pi(i) + 1$ for all
$i \in [n]$. The {\em inverse} of $\pi$ is denoted by $\pi^{-1}$.

If $u$ is a word of integers without multiple occurrences of a same
letter, $\STD(u)$ is the {\em standardized} of $u$, that is the unique
permutation of the same size as $u$ such that for all $i, j \in [|u|]$,
$u_i < u_j$ if and only if $\STD(u)(i) < \STD(u)(j)$. In particular, the
image of the map $\STD$ is the set $S$ of all permutations. Two words $u$
and $v$ having the same standardized are {\em order-isomorphic}. If
$\sigma$ is a permutation, there is an {\em occurrence} of (the
{\em pattern}) $\sigma$ in $\pi$ if there is a set $P$ of indexes of
letters of $\pi$ such that $\sigma$ and $\pi_{|P}$ are order-isomorphic.
When $\pi$ does not admit any occurrence of $\sigma$, $\pi$ {\em avoids}
$\sigma$. The set of permutations of size $n$ avoiding $\sigma$ is
denoted by $S_n(\sigma)$.

Let us now provide some definitions about graphs and oriented perfect
matchings that are used in the sequel. If $G$ is an oriented graph
without loops, two different edges of $G$ are {\em independent} if they
do not share any common vertex. We say that $G$ is an
{\em oriented matching} if all edges of $G$ are pairwise independent.
Moreover, $G$ is {\em perfect} if any vertex of $G$ belongs to at least
one arc. For any permutation $\pi$ of $S_n$, an
{\em oriented perfect matching on $\pi$} is an oriented perfect matching
$\mathcal{M}$ on the set of vertices $[n]$. In the sequel, we shall
consider a natural notion of pattern avoidance in oriented perfect
matchings on permutations. For instance, an oriented perfect matching
$\mathcal{M}$ on a permutation $\pi$ {\em admits an occurrence} of the
pattern $\CrossingRL$ if there are four positions $i < j < k < \ell$ in
$\pi$ such that $(\pi(k), \pi(i))$ and $(\pi(j), \pi(\ell))$ are arcs of
$\mathcal{M}$. When $\mathcal{M}$ does not admit any occurrence of a
pattern $\mathcal{P}$, we say that $\mathcal{M}$ {\em avoids}~$\mathcal{P}$.
The definition naturally extends to sets of patterns:
$\mathcal{M}$ {\em avoids}~$P=\{\mathcal{P}_i : 1 \leq i \leq k\}$
if it avoids every pattern~$\mathcal{P}_i$.



\section{Shuffle product on permutations}
\label{section:Shuffle product on permutations}

The purpose of this section is to define a shuffle product $\SHUFFLE$
on permutations. Recall that a first definition of this product was
provided by Vargas~\cite{Vargas:2014}. To present an alternative
definition of this product adapted to our study, we shall first define
a coproduct denoted by $\Delta$, enabling to unshuffle permutations.
By duality, $\Delta$ implies the definition of $\SHUFFLE$. The reason
why we need to pass by the definition of $\Delta$ to define $\SHUFFLE$
is justified by the fact that a lot of properties of $\SHUFFLE$ depend
of properties of $\Delta$, and that this strategy allows to write concise
and clear proofs of them. We invite the reader unfamiliar with the
concepts of coproduct and duality to consult~\cite{Joni:Rota:1979}
or~\cite{Grinberg:Reiner:2014}.

Let us denote by $\QQ[S]$ the linear span of all permutations. We
define a linear coproduct $\Delta$ on $\QQ[S]$ in the following way. For
any permutation $\pi$, we set
\begin{equation} \label{equ:unshuffling_coproduct}
    \Delta(\pi) =
    \sum_{P_1 \sqcup P_2 = [|\pi|]}
    \STD\left(\pi_{|P_1}\right) \otimes \STD\left(\pi_{|P_2}\right).
\end{equation}
We call $\Delta$ the {\em unshuffling coproduct of permutations}. For
instance,
\begin{equation}
    \Delta(213) =
    \epsilon \otimes 213 + 2 \cdot 1 \otimes 12 +
    1 \otimes 21 + 2 \cdot 12 \otimes 1 + 21 \otimes 1
    + 213 \otimes \epsilon,
\end{equation}
\begin{equation}
    \Delta(1234) =
    \epsilon \otimes 1234 + 4 \cdot 1 \otimes 123 + 6 \cdot 12 \otimes 12 +
    4 \cdot 123 \otimes 1 + 1234 \otimes \epsilon,
\end{equation}
\begin{equation}\begin{split} \label{equ:example_unshuffling_coproduct}
    \Delta(1432) & =
    \epsilon \otimes 1432 + 3 \cdot {\bf 1 \otimes 132} + 1 \otimes 321 +
    3 \cdot 12 \otimes 21 \\ & + 3 \cdot 21 \otimes 12
    + 3 \cdot 132 \otimes 1 + 321 \otimes 1 + 1432 \otimes \epsilon.
\end{split}\end{equation}
Observe that the coefficient of the tensor $1 \otimes 132$ is $3$
in~\eqref{equ:example_unshuffling_coproduct} because there are exactly
three ways to extract from the permutation $1432$ two disjoint subwords
respectively order-isomorphic to the permutations $1$ and $132$.

As announced, let us now use $\Delta$ to define a shuffle product on
permutations. As any coproduct, $\Delta$ leads to the definition of a
product obtained by duality in the following way.
From~\eqref{equ:unshuffling_coproduct}, for any permutation $\pi$, we
have
\begin{equation} \label{equ:schematic_coproduct}
    \Delta(\pi) =
    \sum_{\sigma, \nu \in S} \lambda_{\sigma, \nu}^\pi \;
    \sigma \otimes \nu,
\end{equation}
where the $\lambda_{\sigma, \nu}^\pi$ are nonnegative integers. Now,
by definition of duality, the dual product of $\Delta$, denoted by
$\SHUFFLE$, is a linear binary product on $\QQ[S]$. It satisfies, for
any permutations $\sigma$ and $\nu$,
\begin{equation}
    \sigma \SHUFFLE \nu =
    \sum_{\pi \in S}
    \lambda_{\sigma, \nu}^\pi \; \pi,
\end{equation}
where the coefficients $\lambda_{\sigma, \nu}^\pi$ are the ones
of~\eqref{equ:schematic_coproduct}. We call $\SHUFFLE$ the
{\em shuffle product of permutations}. For instance,
\begin{equation}\begin{split} \label{equ:example_shuffle_product}
    12 \SHUFFLE 21 & =
    1243 + 1324 + 2 \cdot 1342 + 2 \cdot 1423 + 3 \cdot {\bf 1432} +
    2134 + 2 \cdot 2314 \\
    & + 3 \cdot 2341 + 2413 + 2 \cdot 2431 + 2 \cdot 3124 + 3142 +
    3 \cdot 3214 + 2 \cdot 3241 \\
    & + 3421 + 3 \cdot 4123 + 2 \cdot 4132 + 2 \cdot 4213 + 4231 + 4312.
\end{split}\end{equation}
Observe that the coefficient $3$ of the permutation $1432$
in~\eqref{equ:example_shuffle_product} comes from the fact that the
coefficient of the tensor $12 \otimes 21$ is $3$
in~\eqref{equ:example_unshuffling_coproduct}.

Intuitively, this product shuffles the values and the positions of the
letters of the permutations. One can observe that the empty permutation
$\epsilon$ is a unit for $\SHUFFLE$ and that this product is graded by
the sizes of the permutations ({\em i.e.}, the product of a permutation
of size $n$ with a permutation of size $m$ produces a sum of permutations
of size $n + m$).

We say that a permutation $\pi$ {\em appears} in the shuffle
$\sigma \SHUFFLE \nu$ of two permutations $\sigma$ and $\nu$ if the
coefficient $\lambda_{\sigma, \nu}^\pi$ defined above is different from
zero. In a more combinatorial way, this is equivalent to say that there
are two sets $P_1$ and $P_2$ of disjoints indexes of letters of $\pi$
satisfying $P_1 \sqcup P_2 = [|\pi|]$ such that the subword $\pi_{|P_1}$
is order-isomorphic to $\sigma$ and the subword $\pi_{|P_2}$ is
order-isomorphic to $\nu$.

A permutation $\pi$ is a {\em square} if there is a permutation
$\sigma$ such that $\pi$ appears in $\sigma \SHUFFLE \sigma$.
In this case, we say that $\sigma$ is a {\em square root} of $\pi$.
Equivalently, $\pi$ is a square with $\sigma$ as square root if and only
if in the expansion of $\Delta(\pi)$, there is a tensor
$\sigma \otimes \sigma$ with a nonzero coefficient. In a more
combinatorial way, this is equivalent to saying that there are two sets
$P_1$ and $P_2$ of disjoints indexes of letters of $\pi$ satisfying
$P_1 \sqcup P_2 = [|\pi|]$ such that the subwords $\pi_{|P_1}$ and
$\pi_{|P_2}$ are order-isomorphic. Computer experiments give us the
first numbers of square permutations with respects to their size, which
are, from size $0$ to $10$,
\begin{equation}
    1, 0, 2, 0, 20, 0, 504, 0, 21032, 0, 1293418.
\end{equation}
This sequence (and its subsequence obtained by removing the $0$'s) is for
the time being not listed in~\cite{Slo}. The square permutations of
sizes $0$ to $4$ are
\smallskip

\begin{tabular}{c|c|c}
    Size $0$ \, & Size $2$ & Size $4$ \\ \hline
    \multirow{2}{*}{\, $\epsilon$} &
    \multirow{2}{*}{\, $12$, $21$ \,} &
    $1234$, $1243$, $1423$, $1324$, $1342$, $4132$, $3124$, $3142$,
    $3412$, $4312$, \\
    & & $2134$, $2143$, $2413$, $4213$, $2314$, $2431$, $4231$, $3241$,
    $3421$, $4321$
\end{tabular}


\section{Binary square words and permutations}
\label{section:Binary square words and permutations}
In this section, we shall show that the square binary words are in
one-to-one correspondence with square permutations avoiding some
patterns. This property establishes a link between the shuffle of binary
words and our shuffle of permutations and allows to obtain a new
description of square binary words.

Let $u$ be a binary word of length $n$ with $k$ occurrences of $0$.
We denote by $\BINTOPERM$ (Binary word To Permutation) the map sending
any such word $u$ to the permutation obtained by replacing from left to
right each occurrence of $0$ in $u$ by $1$, $2$, \dots, $k$, and from
right to left each occurrence of $1$ in $u$ by $k + 1$, $k + 2$, \dots, $n$.
For instance,
\begin{equation}
    \BINTOPERM({\bf 1}00{\bf 1}0{\bf 1}{\bf 1}0{\bf 1}000) =
    {\bf C} 12 {\bf B} 3 {\bf A} {\bf 9} 4 {\bf 8} 567,
\end{equation}
where $\mathrm{A}$, $\mathrm{B}$, and $\mathrm{C}$ respectively stand
for $10$, $11$, and $12$. Observe that for any nonempty permutation
$\pi$ in the image of $\BINTOPERM$, there is exactly one binary word $u$
such that $\BINTOPERM(u0) = \BINTOPERM(u1) = \pi$. In support of this
observation, when $\pi$ has an even size, we denote by $\PERMTOBIN(\pi)$
(Permutation To Binary word) the word $ua$ such that $|ua|_0$ and $|ua|_1$
are both even, where $a \in \{0, 1\}$.

\begin{proposition} \label{prop:bijection_binary_to_permutations_squares}
    For any $n \geq 0$, the map $\BINTOPERM$ restricted to the set of
    square binary words of length $2n$ is a bijection between this last
    set and the set of square permutations of size $2n$ avoiding the
    patterns $213$ and $231$.
\end{proposition}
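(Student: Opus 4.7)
The plan rests on a single commutation identity: for any binary word $u$ and any subset $P \subseteq [|u|]$,
\[
    \BINTOPERM(u_{|P}) = \STD\bigl(\BINTOPERM(u)_{|P}\bigr).
\]
This is a direct value-tracking check: in $\BINTOPERM(u)$ the ``small'' values $1, \dots, k$ (at the $0$-positions of $u$) occupy positions in increasing left-to-right order, while the ``large'' values $k+1, \dots, |u|$ (at the $1$-positions) occupy positions in decreasing left-to-right order, and restricting to $P$ and standardizing preserves this shape with the new parameter $k' = |\{j \in P : u(j) = 0\}|$. As a by-product of the same shape observation, setting $\pi := \BINTOPERM(u)$, any three positions $i < j < l$ must have $\pi(i)$ either as the minimum or as the maximum of $\{\pi(i), \pi(j), \pi(l)\}$, which forbids both patterns $213$ and $231$. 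Combined with the recurrence $|S_n(213, 231)| = 2^{n-1}$ for $n \geq 1$ (a nonempty avoider must start with its minimum or its maximum) and the $2$-to-$1$ nature of $\BINTOPERM$ observed just before the proposition, this identifies the image of $\BINTOPERM$ on length-$n$ binary words as exactly $S_n(213, 231)$.

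If $u$ is a square of length $2n$ with a witness $P_1 \sqcup P_2 = [2n]$ such that $u_{|P_1} = u_{|P_2} = v$, the identity yields $\STD(\pi_{|P_1}) = \BINTOPERM(v) = \STD(\pi_{|P_2})$ with $\pi := \BINTOPERM(u)$, so $\pi$ is a square permutation (and avoids $213, 231$ by the previous step). For injectivity on squares, the two preimages of any nonempty $\pi$ under $\BINTOPERM$ differ only in their last letter, hence in the parity of their number of $0$'s; a square binary word of length $2n$ has even numbers of both $0$'s and $1$'s, so among the two preimages of $\pi$ at most one can be a square.

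The remaining and, I expect, most delicate step is surjectivity. Given a square $\pi \in S_{2n}(213, 231)$, let $u := \PERMTOBIN(\pi)$ be the unique preimage with even numbers of $0$'s and $1$'s, and fix a partition $P_1 \sqcup P_2 = [2n]$ with $\STD(\pi_{|P_1}) = \STD(\pi_{|P_2}) = \sigma$. The commutation identity gives $\BINTOPERM(u_{|P_1}) = \sigma = \BINTOPERM(u_{|P_2})$, so $u_{|P_1}$ and $u_{|P_2}$ are two binary preimages of $\sigma$ and therefore either coincide or differ only in their last letter. Letting $k_i$ denote the number of $0$'s in $u_{|P_i}$, this forces $|k_1 - k_2| \leq 1$, while the choice of $u$ makes $k_1 + k_2$ even; together these give $k_1 = k_2$, hence $u_{|P_1} = u_{|P_2}$, so $u$ is a square. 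The crux of the argument is precisely this final parity-controlled collapse: it is the balanced preimage $\PERMTOBIN(\pi)$ together with the $2$-to-$1$ structure of $\BINTOPERM$ that force the two halves of a witness partition of $\pi$ to coincide as binary words.
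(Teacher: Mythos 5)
Your proof is correct and follows essentially the same route as the paper's (characterize the image of $\BINTOPERM$ as the $(213,231)$-avoiders, push squares forward via the commutation identity, and pull square avoiders back through $\PERMTOBIN$). The one notable difference is that you make explicit the parity argument forcing $u_{|P_1} = u_{|P_2}$ in the surjectivity step --- a point the paper's Claim~3 asserts without detail --- and you identify the image by a $2$-to-$1$ counting argument where the paper constructs preimages greedily; both are sound.
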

\begin{proof}
    [of Proposition~\ref{prop:bijection_binary_to_permutations_squares}]
    The statement of the proposition is a consequence of the
    following claims implying that $\PERMTOBIN$ is the inverse
    map of $\BINTOPERM$ over the set of square binary words.
    \begin{claim} \label{claim:binary_to_permutation_avoiding}
        The image of $\BINTOPERM$ is the set of all permutations
        avoiding $213$ and $231$.
    \end{claim}
    \begin{proof}[of Claim~\ref{claim:binary_to_permutation_avoiding}]
        Let us first show that the image of $\BINTOPERM$ contains only
        permutations avoiding $213$ and $231$. Let $u$ be a binary word,
        $\pi = \BINTOPERM(u)$, and $P_0$ (resp. $P_1$) be the set of the
        positions of the occurrences of $0$ (resp. $1$) in $u$. By
        definition of $\BINTOPERM$, from left to right, the subword
        $v = \pi_{|P_0}$ is increasing and the subword $w = \pi_{|P_1}$
        is decreasing, and all letters of $w$ are greater than those
        of~$v$. Now, assume that $\pi$ admits an occurrence of $213$.
        Then, since $v$ is increasing and $w$ is decreasing, there is an
        occurrence of $3$ (resp. $13$, $23$) in $v$ and a relative
        occurrence of $21$ (resp. $2$, $1$). All these three cases
        contradict the fact that all letters of $w$ are greater than
        those of $v$. A similar argument shows that $\pi$ avoids~$231$
        as well.
        \smallskip

        Finally, observe that any permutation $\pi$ avoiding $213$ and
        $231$ necessarily starts by the smallest possible letter or the
        greatest possible letter. This property is then true for the
        suffix of $\pi$ obtained by deleting its first letter,
        and so on for all of its suffixes. Thus, by
        replacing each letter $a$ of $\pi$ by $0$ (resp. $1$) if $a$ has the
        role of a smallest (resp. greatest) letter, one obtains a binary
        word $u$ such that $\BINTOPERM(u) = \pi$. Hence, all permutations
        avoiding $213$ and $231$ are in the image of $\BINTOPERM$.
        \qed
    \end{proof}

    \begin{claim} \label{claim:square_binary_to_square_permutation}
        If $u$ is a square binary word, $\BINTOPERM(u)$ is a square
        permutation.
    \end{claim}
    \begin{proof}[of Claim~\ref{claim:square_binary_to_square_permutation}]
        Since $u$ is a square binary word, there is a binary word $v$
        such that $u \in v \shuffle v$. Then, there are two disjoint
        sets $P$ and $Q$ of positions of letters of $u$ such that
        $u_{|P} = v = u_{|Q}$. Now, by definition of $\BINTOPERM$, the
        words $\BINTOPERM(u)_{|P}$ and $\BINTOPERM(u)_{|Q}$ have the
        same standardized $\sigma$. Hence, and by definition of
        the shuffle product of permutations, $\BINTOPERM(u)$ appears in
        $\sigma \SHUFFLE \sigma$, showing that $\BINTOPERM(u)$ is a
        square permutation.
        \qed
    \end{proof}

    \begin{claim} \label{claim:square_permutation_to_square_binary}
        If $\pi$ is a square permutation avoiding $213$ and $231$,
        $\PERMTOBIN(\pi)$ is a square binary word.
    \end{claim}
    \begin{proof}[of Claim~\ref{claim:square_permutation_to_square_binary}]
        Let $\pi$ be a square permutation avoiding $213$ and $231$. By
        Claim~\ref{claim:binary_to_permutation_avoiding}, $\pi$ is in
        the image of $\BINTOPERM$ and hence, $u = \PERMTOBIN(\pi)$ is a
        well-defined binary word. Since $\pi$ is a square permutation,
        there are two disjoint sets $P_1$ and $P_2$ of indexes of letters
        of $\pi$ such that $\pi_{|P_1}$ and $\pi_{|P_2}$ are
        order-isomorphic. This implies, by the definitions of $\BINTOPERM$
        and $\PERMTOBIN$, that $u_{|P_1} = u_{|P_2}$, showing that $u$
        is a square binary word.
        \qed
    \end{proof}
    \qed
\end{proof}

The number of square binary words is Sequence A191755 of~\cite{Slo}
beginning by
\begin{equation}
1, 0, 2, 0, 6, 0, 22, 0, 82, 0, 320, 0, 1268, 0, 5102, 0, 020632.
\end{equation}
According to
Proposition~\ref{prop:bijection_binary_to_permutations_squares}, this
is also the sequence enumerating square permutations avoiding $213$
and $231$.


\section{Algebraic issues}
\label{section:Algebraic issues}
The aim of this section is to establish some of properties of the
shuffle product of permutations $\SHUFFLE$. It is worth to note that, as
we will see, algebraic properties of the unshuffling coproduct $\Delta$
of permutations defined in
Section~\ref{section:Shuffle product on permutations} lead to
combinatorial properties of $\SHUFFLE$.

\begin{proposition} \label{prop:shuffle_associative_commutative}
    The shuffle product $\SHUFFLE$ of permutations is associative and
    commutative.
\end{proposition}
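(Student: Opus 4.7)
The plan is to derive both properties from corresponding properties of the unshuffling coproduct $\Delta$, exploiting the duality relationship between $\Delta$ and $\SHUFFLE$ already set up in Section~\ref{section:Shuffle product on permutations}. Specifically, by duality, $\SHUFFLE$ is commutative if and only if $\Delta$ is cocommutative, and $\SHUFFLE$ is associative if and only if $\Delta$ is coassociative. So I would reduce the statement to proving these two properties of $\Delta$.

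Cocommutativity of $\Delta$ is the easier half. The defining sum~\eqref{equ:unshuffling_coproduct} ranges over all ordered pairs $(P_1, P_2)$ with $P_1 \sqcup P_2 = [|\pi|]$; the involution $(P_1, P_2) \mapsto (P_2, P_1)$ on this index set exchanges the two tensor factors. Hence $\Delta(\pi)$ equals its image under the swap $\sigma \otimes \nu \mapsto \nu \otimes \sigma$, which is exactly cocommutativity.

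Coassociativity is the main technical step. I would compute both $(\Delta \otimes \mathrm{id}) \circ \Delta(\pi)$ and $(\mathrm{id} \otimes \Delta) \circ \Delta(\pi)$ explicitly, and show each equals the symmetric triple sum
\begin{equation}
    \sum_{P_1 \sqcup P_2 \sqcup P_3 = [|\pi|]}
    \STD(\pi_{|P_1}) \otimes \STD(\pi_{|P_2}) \otimes \STD(\pi_{|P_3}).
\end{equation}
For $(\Delta \otimes \mathrm{id}) \circ \Delta$, expanding the inner $\Delta$ gives a sum indexed by pairs $(P_1, P_2)$ together with a splitting $(Q_1, Q_2)$ of the positions of $\STD(\pi_{|P_1})$. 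The key obstacle here, and really the only non-trivial point in the whole argument, is to verify the standardization identity
\begin{equation}
    \STD\bigl(\STD(\pi_{|P_1})_{|Q_i}\bigr) = \STD(\pi_{|R_i}),
\end{equation}
where $R_i \subseteq P_1$ is the subset of positions of $\pi$ corresponding to $Q_i$ under the natural order-preserving bijection between $P_1$ and $[|P_1|]$. This holds because standardization only depends on the relative order of entries, which is unchanged by restricting and then standardizing. With this identity in hand, the sum is re-indexed as a sum over ordered partitions $(R_1, R_2, P_2)$ of $[|\pi|]$ into three disjoint blocks, yielding the symmetric triple sum above. An entirely symmetric computation handles $(\mathrm{id} \otimes \Delta) \circ \Delta$, so the two iterated coproducts coincide and $\Delta$ is coassociative. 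Dualizing concludes the proof. \qed
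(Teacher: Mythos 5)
Your proposal is correct and follows essentially the same route as the paper: cocommutativity via the involution $(P_1,P_2)\mapsto(P_2,P_1)$ on the index set, and coassociativity by showing both iterated coproducts equal the symmetric triple sum over $P_1 \sqcup P_2 \sqcup P_3 = [|\pi|]$, then dualizing. You in fact make explicit the standardization identity $\STD\bigl(\STD(\pi_{|P_1})_{|Q_i}\bigr) = \STD(\pi_{|R_i})$ that the paper's computation uses silently, which is a small but welcome addition.
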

\begin{proof}[of Proposition~\ref{prop:shuffle_associative_commutative}]
    To prove the associativity of $\SHUFFLE$, it is convenient to show
    that its dual coproduct $\Delta$ is coassociative, that is
    \begin{equation}
        (\Delta \otimes I) \Delta = (I \otimes \Delta) \Delta,
    \end{equation}
    where $I$ denotes the identity map. This strategy relies on the fact
    that a product is associative if and only if its dual coproduct is
    coassociative. For any permutation $\pi$, we have
    \begin{equation} \begin{split}
    \label{equ:shuffle_associative_commutative}
        (\Delta \otimes I) \Delta(\pi) & =
        (\Delta \otimes I)
        \sum_{P_1 \sqcup P_2 = [|\pi|]}
        \STD\left(\pi_{|P_1}\right) \otimes \STD\left(\pi_{|P_2}\right) \\
        & =
        \sum_{P_1 \sqcup P_2 = [|\pi|]}
        \Delta\left(\STD\left(\pi_{|P_1}\right)\right)
        \otimes I\left(\STD\left(\pi_{|P_2}\right)\right) \\
        & =
        \sum_{P_1 \sqcup P_2 = [|\pi|]} \;
        \sum_{Q_1 \sqcup Q_2 = [|P_1|]}
        \STD\left(\STD\left(\pi_{|P_1}\right)_{|Q_1}\right)
        \otimes
        \STD\left(\STD\left(\pi_{|P_1}\right)_{|Q_2}\right)
        \otimes \STD\left(\pi_{|P_2}\right) \\
        & =
        \sum_{P_1 \sqcup P_2 \sqcup P_3 = [|\pi|]}
        \STD\left(\pi_{|P_1}\right) \otimes
        \STD\left(\pi_{|P_2}\right) \otimes
        \STD\left(\pi_{|P_3}\right).
    \end{split} \end{equation}
    An analogous computation shows that $(I \otimes \Delta) \Delta(\pi)$
    is equal to the last member
    of~\eqref{equ:shuffle_associative_commutative}, whence the
    associativity of $\SHUFFLE$.
    \smallskip

    Finally, to prove the commutativity of $\SHUFFLE$, we shall show
    that $\Delta$ is cocommutative, that is for any permutation $\pi$,
    if in the expansion of $\Delta(\pi)$ there is a tensor
    $\sigma \otimes \nu$ with a coefficient $\lambda$, there is in the
    same expansion the tensor $\nu \otimes \sigma$ with the same
    coefficient $\lambda$. Clearly, a product is commutative if and only
    if its dual coproduct is cocommutative. Now, from the
    definition~\eqref{equ:unshuffling_coproduct} of $\Delta$, one
    observes that if the pair $(P_1, P_2)$ of subsets of $[|\pi|]$
    contributes to the coefficient of
    $\STD\left(\pi_{|P_1}\right) \otimes \STD\left(\pi_{|P_2}\right)$,
    the pair $(P_2, P_1)$ contributes to the coefficient of
    $\STD\left(\pi_{|P_2}\right) \otimes \STD\left(\pi_{|P_1}\right)$.
    This shows that $\Delta$ is cocommutative and hence, that $\SHUFFLE$
    is commutative.
    \qed
\end{proof}

Proposition~\ref{prop:shuffle_associative_commutative} implies that
$\QQ[S]$ endowed with the unshuffling coproduct $\Delta$ is a
coassociative cocommutative coalgebra, or in an equivalent way, that
$\QQ[S]$ endowed with the shuffle product $\SHUFFLE$ is an associative
commutative algebra.

\begin{lemma} \label{lem:endomorphisms}
    The three linear maps
    \begin{equation}
        \phi_1, \phi_2, \phi_3 : \QQ[S] \to \QQ[S]
    \end{equation}
    linearly sending a permutation $\pi$ to, respectively,
    $\widetilde{\pi}$, $\bar \pi$, and $\pi^{-1}$ are endomorphims of
    associative algebras.
\end{lemma}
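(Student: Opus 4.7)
The plan is to leverage the duality between $\SHUFFLE$ and $\Delta$ exactly as in the proof of Proposition~\ref{prop:shuffle_associative_commutative}. A linear map $\phi : \QQ[S] \to \QQ[S]$ is an algebra endomorphism for $\SHUFFLE$ if and only if it is a coalgebra endomorphism for $\Delta$, that is
\begin{equation}
    \Delta \circ \phi = (\phi \otimes \phi) \circ \Delta.
\end{equation}
This reduces the lemma to checking this identity for each of $\phi_1$, $\phi_2$, $\phi_3$.

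For $\phi_1$ (mirror), the plan is to use the position reindexing $p \mapsto |\pi|+1-p$. Writing $\widetilde{P} = \{|\pi|+1-p : p \in P\}$, one checks directly that $\widetilde{\pi}_{|\widetilde{P}} = \widetilde{\pi_{|P}}$, hence $\STD(\widetilde{\pi}_{|\widetilde{P}}) = \widetilde{\STD(\pi_{|P})}$. Since $(P_1, P_2) \mapsto (\widetilde{P_1}, \widetilde{P_2})$ is a bijection on ordered partitions of $[|\pi|]$, re-indexing the defining sum of $\Delta(\widetilde{\pi})$ yields $(\phi_1 \otimes \phi_1) \Delta(\pi)$. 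For $\phi_2$ (complement), the same kind of argument applies, but now on the values: for any $P \subseteq [|\pi|]$ the word $\bar{\pi}_{|P}$ is obtained from $\pi_{|P}$ by replacing every letter $a$ by $|\pi|+1-a$, an operation that reverses the relative order of the letters, so $\STD(\bar{\pi}_{|P}) = \overline{\STD(\pi_{|P})}$. Substituting into $\Delta(\bar\pi)$ immediately gives $(\phi_2 \otimes \phi_2) \Delta(\pi)$.

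The only genuine obstacle is the case of $\phi_3$ (inverse), since inversion exchanges the roles of positions and values, so the reindexing of partitions is no longer trivial. The key identity to establish is: for every $P \subseteq [|\pi|]$, setting $V = \{\pi(p) : p \in P\}$, one has
\begin{equation}
    \STD\!\left((\pi^{-1})_{|V}\right) = \STD(\pi_{|P})^{-1}.
\end{equation}
This follows from unpacking the definitions of $\pi^{-1}$ and $\STD$: the $k$-th smallest element of $V$ is $\pi(p_k)$, where $p_k$ is the position in $P$ carrying the $k$-th smallest value of $\pi_{|P}$, so reading the letters of $(\pi^{-1})_{|V}$ inverts the pattern read in $\pi_{|P}$. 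Because $\pi$ is a bijection, the map $(P_1, P_2) \mapsto (\pi(P_1), \pi(P_2))$ is a bijection on ordered partitions of $[|\pi|]$; combining this bijection with the identity above and re-indexing the definition of $\Delta(\pi^{-1})$ yields $(\phi_3 \otimes \phi_3) \Delta(\pi)$, which completes the proof.
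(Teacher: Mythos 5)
The paper states Lemma~\ref{lem:endomorphisms} without any proof, so there is no argument of the authors to compare yours against; what you have written is a correct and complete proof, and it follows exactly the strategy the paper advertises (and uses for Proposition~\ref{prop:shuffle_associative_commutative}), namely working on the coproduct side. Your treatment of the three cases is sound: the position reindexing $P \mapsto \widetilde{P}$ for the mirror, the purely value-level observation $\STD(\bar{\pi}_{|P}) = \overline{\STD(\pi_{|P})}$ for the complement, and the identity $\STD\bigl((\pi^{-1})_{|\pi(P)}\bigr) = \STD(\pi_{|P})^{-1}$ for the inverse all check out (for the last one: writing $P = \{p_1 < \dots < p_m\}$ and $\tau = \STD(\pi_{|P})$, the $j$-th smallest element of $\pi(P)$ is $\pi(p_{\tau^{-1}(j)})$, whose image under $\pi^{-1}$ is $p_{\tau^{-1}(j)}$, of rank $\tau^{-1}(j)$ in $P$). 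One point you should make explicit rather than assert in passing: the general duality statement is that $\phi$ is an algebra morphism for the dual product if and only if its \emph{adjoint} $\phi^{*}$ is a coalgebra morphism for $\Delta$; the clean equivalence you invoke, with $\phi$ itself on both sides, holds here because each $\phi_j$ is induced by an involution of the basis $S$ and is therefore self-adjoint for the pairing in which permutations are orthonormal (equivalently, unwinding coefficients, both conditions amount to $\lambda^{\phi_j(\pi)}_{\phi_j(\sigma),\phi_j(\nu)} = \lambda^{\pi}_{\sigma,\nu}$). With that one sentence added, the proof is complete.
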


We now use the algebraic properties of $\SHUFFLE$ exhibited by
Lemma~\ref{lem:endomorphisms} to obtain combinatorial properties
of square permutations.

\begin{proposition} \label{prop:square_stability}
    Let $\pi$ be a square permutation and $\sigma$ be a square root of
    $\pi$. Then,
    \begin{enumerate}[label={\it (\roman*)},fullwidth]
        \item \label{item:square_stability_1}
        the permutation $\widetilde{\pi}$ is a square and
        $\widetilde{\sigma}$ is one of its square roots;
        \item \label{item:square_stability_2}
        the permutation $\bar \pi$ is a square and $\bar \sigma$ is one of
        its square roots;
        \item \label{item:square_stability_3}
        the permutation $\pi^{-1}$ is a square and $\sigma^{-1}$ is one of
        its square roots.
    \end{enumerate}
\end{proposition}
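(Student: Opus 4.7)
My plan is to derive the three parts of the proposition uniformly, as immediate consequences of Lemma~\ref{lem:endomorphisms} together with the fact that the three maps $\phi_1, \phi_2, \phi_3$ are bijections on the linear basis $S$ of $\QQ[S]$. Indeed, mirror image, complement, and inverse are involutions on permutations, so each $\phi_i$ restricts to a bijection $S \to S$. Hence for any linear combination $\sum_\pi c_\pi \, \pi$, the coefficient of $\phi_i(\pi)$ in $\phi_i\!\left(\sum_\pi c_\pi \, \pi\right)$ is exactly $c_\pi$.

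With this observation in hand, the proof is just an application of the algebra endomorphism property. By hypothesis, $\pi$ appears in $\sigma \SHUFFLE \sigma$, meaning that the coefficient $\lambda^\pi_{\sigma, \sigma}$ of $\pi$ in the expansion of $\sigma \SHUFFLE \sigma$ is nonzero. Applying $\phi_i$ to the identity
\begin{equation}
    \sigma \SHUFFLE \sigma = \sum_{\pi' \in S} \lambda^{\pi'}_{\sigma, \sigma} \, \pi'
\end{equation}
and using that $\phi_i$ is an endomorphism of associative algebras (Lemma~\ref{lem:endomorphisms}), I get
\begin{equation}
    \phi_i(\sigma) \SHUFFLE \phi_i(\sigma) = \phi_i(\sigma \SHUFFLE \sigma) = \sum_{\pi' \in S} \lambda^{\pi'}_{\sigma, \sigma} \, \phi_i(\pi').
\end{equation}
Since $\phi_i$ is a bijection on $S$, the right-hand side is already in expanded form on the basis $S$, and the coefficient of $\phi_i(\pi)$ equals $\lambda^\pi_{\sigma, \sigma} \ne 0$. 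Therefore $\phi_i(\pi)$ appears in $\phi_i(\sigma) \SHUFFLE \phi_i(\sigma)$, which means that $\phi_i(\pi)$ is a square with $\phi_i(\sigma)$ as a square root.

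Specialising $\phi_i$ to $\phi_1, \phi_2, \phi_3$ yields items~\ref{item:square_stability_1}, \ref{item:square_stability_2}, and~\ref{item:square_stability_3}, respectively. There is no real obstacle here, beyond being careful with the basis argument: the key point is precisely that the three maps are bijective on the permutation basis, so that nonvanishing coefficients are preserved rather than potentially cancelled by the linear extension of $\phi_i$. Because of this, the heavy lifting has already been done in Lemma~\ref{lem:endomorphisms}, and the proof of Proposition~\ref{prop:square_stability} reduces to a one-line application of the endomorphism property to the equation witnessing that $\pi$ is a square.
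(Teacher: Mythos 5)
Your proof is correct and follows essentially the same route as the paper: both reduce the statement to Lemma~\ref{lem:endomorphisms} by applying $\phi_i$ to the expansion of $\sigma \SHUFFLE \sigma$ and reading off the coefficient of $\phi_i(\pi)$. The only difference is that you make explicit the point the paper leaves implicit --- that each $\phi_i$ permutes the basis $S$, so nonzero coefficients cannot be cancelled by linearity --- which is a welcome bit of extra care rather than a divergence.
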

\begin{proof}[of Proposition~\ref{prop:square_stability}]
    All statements~\ref{item:square_stability_1},
    \ref{item:square_stability_2}, and~\ref{item:square_stability_3} are
    consequences of Lemma~\ref{lem:endomorphisms}. Indeed,
    since $\pi$ is a square permutation and $\sigma$ is a square root of
    $\pi$, by definition, $\pi$ appears in the product
    $\sigma \SHUFFLE \sigma$. Now, by Lemma~\ref{lem:endomorphisms},
    for any $j = 1, 2, 3$, since $\phi_j$ is a morphism of associative
    algebras from $\QQ[S]$ to $\QQ[S]$, $\phi_j$ commutes with the
    shuffle product of permutations $\SHUFFLE$. Hence, in particular,
    one has
    \begin{equation}
        \phi_j(\sigma \SHUFFLE \sigma) =
        \phi_j(\sigma) \SHUFFLE \phi_j(\sigma).
    \end{equation}
    Then, since $\pi$ appears in $\sigma \SHUFFLE \sigma$, $\phi_j(\pi)$
    appears in $\phi_j(\sigma \SHUFFLE \sigma)$ and appears also in
    $\phi_j(\sigma) \SHUFFLE \phi_j(\sigma)$. This shows that
    $\phi_j(\sigma)$ is a square root of $\phi_j(\pi)$ and
    implies~\ref{item:square_stability_1}, \ref{item:square_stability_2},
    and~\ref{item:square_stability_3}.
    \qed
\end{proof}

Let us make an observation about Wilf-equivalence classes of permutations
restrained on square permutations. Recall that two permutations $\sigma$
and $\nu$ of the same size are {\em Wilf equivalent} if
$\# S_n(\sigma) = \# S_n(\nu)$ for all $n \geq 0$. The
well-known~\cite{Simion:Schmidt:EJC:1985} fact that there is a single
Wilf-equivalence class of permutations of size $3$ together with
Proposition~\ref{prop:square_stability} imply that $123$ and $321$ are
in the same Wilf-equivalence class of square permutations, and that
$132$, $213$, $231$, and $312$ are in the same Wilf-equivalence class of
square permutations. Computer experiments show us that there are two
Wilf-equivalence classes of square permutations of size $3$. Indeed, the
number of square permutations avoiding $123$ begins by
\begin{equation} \label{equ:sequence_square_123}
    1, 0, 2, 0, 12, 0, 118, 0, 1218, 0, 14272,
\end{equation}
while the number of square permutations avoiding $132$ begins by
\begin{equation} \label{equ:sequence_square_132}
    1, 0, 2, 0, 11, 0, 84, 0, 743, 0, 7108.
\end{equation}

Besides, an other consequence of Proposition~\ref{prop:square_stability}
is that its makes sense to enumerate the sets of square permutations
quotiented by the operations of mirror image, complement, and
inverse. The sequence enumerating these sets begins by
\begin{equation} \label{equ:sequence_square_classes}
    1, 0, 1, 0, 6, 0, 81, 0, 2774, 0, 162945.
\end{equation}

All Sequences~\eqref{equ:sequence_square_123}, \eqref{equ:sequence_square_132},
and~\eqref{equ:sequence_square_classes} (and their subsequences obtained
by removing the $0$s) are for the time being not listed in~\cite{Slo}.


\section{Algorithmic issues}
\label{section:Algorithmic issues}

This section is devoted to proving hardness of recognizing square
permutations. In the same way as happens with words, we shall use a
linear graph framework where deciding whether a permutation is a square
reduces to computing some specific matching in the associated linear
graph~\cite{Buss:Soltys:2014,Rizzi:Vialette:CSR:2013}. We have, however,
to deal with oriented perfect matchings. The needed properties read
as follows (see Fig.~\ref{fig:example containment-free matching}).
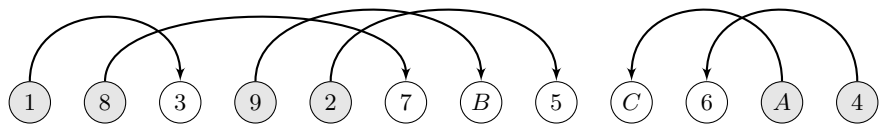
\begin{figure}[ht!]
    \centering
    \begin{tikzpicture}[scale=0.5,inner sep=2pt,node distance=1cm]
        \draw node [draw,circle,minimum size=.55cm,fill=black!10](U01){$1$};
        \draw node [draw,circle,minimum size=.55cm,fill=black!10,right of=U01]
            (U02) {$8$};
        \draw node [draw,circle,minimum size=.55cm,right of=U02](U03){$3$};
        \draw node [draw,circle,minimum size=.55cm,fill=black!10,right of=U03]
            (U04) {$9$};
        \draw node [draw,circle,minimum size=.55cm,fill=black!10,right of=U04]
            (U05) {$2$};
        \draw node [draw,circle,minimum size=.55cm,right of=U05](U06){$7$};
        \draw node [draw,circle,minimum size=.55cm,right of=U06](U07){$B$};
        \draw node [draw,circle,minimum size=.55cm,right of=U07](U08){$5$};
        \draw node [draw,circle,minimum size=.55cm,right of=U08](U09){$C$};
        \draw node [draw,circle,minimum size=.55cm,right of=U09](U10){$6$};
        \draw node [draw,circle,minimum size=.55cm,fill=black!10,right of=U10]
            (U11) {$A$};
        \draw node [draw,circle,minimum size=.55cm,fill=black!10,right of=U11]
            (U12) {$4$};
        \draw [thick,->,>=latex']
        (U01.north) .. controls ($ (U01.north) + (0,2.25) $)
            and ($ (U03.north) + (0,2.25) $) .. (U03.north);
        \draw [thick,->,>=latex']
        (U02.north) .. controls ($ (U02.north) + (0,2.25) $)
            and ($ (U06.north) + (0,2.25) $) .. (U06.north);
        \draw [thick,->,>=latex']
        (U04.north) .. controls ($ (U04.north) + (0,2.5) $)
            and ($ (U07.north) + (0,2.5) $) .. (U07.north);
        \draw [thick,->,>=latex']
        (U05.north) .. controls ($ (U05.north) + (0,2.5) $)
            and ($ (U08.north) + (0,2.5) $) .. (U08.north);
        \draw [thick,->,>=latex']
        (U11.north) .. controls ($ (U11.north) + (0,2.5) $)
            and ($ (U09.north) + (0,2.5) $) .. (U09.north);
        \draw [thick,->,>=latex']
        (U12.north) .. controls ($ (U12.north) + (0,2.5) $)
            and ($ (U10.north) + (0,2.5) $) .. (U10.north);
    \end{tikzpicture}
    \caption{\label{fig:example containment-free matching}%
        An oriented perfect matching $\mathcal{M}$ on the permutation
        $\pi = 183927B5C6A4$ satisfying the properties $\mathbf{P_1}$
        and $\mathbf{P_2}$. From $\mathcal{M}$, it follows that $\pi$ is
        a square as it appears in the shuffle of $1892A4$ and $37B5C6$,
        both being order-isomorphic to $145263$.
    }
\end{figure}

\begin{definition}[Property $\mathbf{P_1}$]
  \label{definition:Property P_1}
  Let $\pi$ be a permutation. An oriented perfect matching $\mathcal{M}$
  on $\pi$ is said to have property $\mathbf{P_1}$ if it avoids all the
  six patterns \InclusionLL, \InclusionLR, \InclusionRL,
  \InclusionRR, \CrossingLR, and \CrossingRL.
\end{definition}

\begin{definition}[Property $\mathbf{P_2}$]
  \label{definition:Property P_2}
  Let $\pi$ be a permutation. An oriented perfect matching $\mathcal{M}$
  on $\pi$ is said to have property $\mathbf{P_2}$ if, for any two
  distinct arcs $(\pi(a), \pi(a'))$ and $(\pi(b), \pi(b'))$ in $\mathcal{M}$,
  we have $\pi(a) < \pi(b)$ if and only if $\pi(a') < \pi(b')$.
\end{definition}

The rationale for introducing properties $\mathbf{P_1}$ and $\mathbf{P_2}$
stems from the following lemma.

\begin{lemma}
  \label{lemma:matching}
  Let $\pi$ be a permutation. The following statements are equivalent:
  \begin{enumerate}
    \item The permutation $\pi$ is a square.
    \item There exists an oriented perfect matching $\mathcal{M}$
    on $\pi$ satisfying~$\mathbf{P_1}$ and~$\mathbf{P_2}$.
  \end{enumerate}
\end{lemma}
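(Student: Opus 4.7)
The plan is to establish the equivalence via a direct correspondence between partitions of $[|\pi|]$ witnessing that $\pi$ is a square and oriented perfect matchings satisfying both $\mathbf{P_1}$ and $\mathbf{P_2}$.

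For the direction square $\Rightarrow$ matching, given a partition $P_1 \sqcup P_2 = [|\pi|]$ with $\pi_{|P_1}$ and $\pi_{|P_2}$ order-isomorphic, I would write $P_1 = \{p_1 < \cdots < p_n\}$ and $P_2 = \{q_1 < \cdots < q_n\}$ in increasing position order and define the \emph{canonical} matching $\mathcal{M}$ by the oriented arcs $(\pi(p_i), \pi(q_i))$ for $i \in [n]$. Property $\mathbf{P_2}$ is immediate from order-isomorphism, which gives $\pi(p_i) < \pi(p_j) \Leftrightarrow \pi(q_i) < \pi(q_j)$. For $\mathbf{P_1}$, the indexing forces $i<j \Rightarrow p_i<p_j \text{ and } q_i<q_j$, and a short case analysis on the relative order of $p_i, p_j, q_i, q_j$ rules out every one of the six forbidden patterns by producing a contradiction in each configuration.

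For the converse direction, given $\mathcal{M}$ satisfying $\mathbf{P_1}$ and $\mathbf{P_2}$, let $P_1$ be the set of positions of arc sources and $P_2$ the set of positions of arc targets; these partition $[|\pi|]$ since $\mathcal{M}$ is perfect. The central step, which I expect to be the main obstacle, is to show that when the arcs are relabelled so that their source positions satisfy $p_1 < \cdots < p_n$, the corresponding target positions also satisfy $q_1 < \cdots < q_n$. I would argue by contradiction: assume $p_i < p_j$ but $q_j < q_i$; then the four distinct positions admit exactly six possible relative orderings, and a direct inspection shows that each one realizes precisely one forbidden pattern. Concretely, the orderings $p_i < p_j < q_j < q_i$, $p_i < q_j < p_j < q_i$, $p_i < q_j < q_i < p_j$, $q_j < p_i < p_j < q_i$, $q_j < p_i < q_i < p_j$, and $q_j < q_i < p_i < p_j$ realize respectively the patterns $\InclusionLL$, $\InclusionLR$, $\CrossingLR$, $\CrossingRL$, $\InclusionRL$, and $\InclusionRR$ — a clean one-to-one correspondence between the six orderings and the six forbidden patterns, each contradicting $\mathbf{P_1}$.

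Once monotonicity $p_i < p_j \Leftrightarrow q_i < q_j$ is secured, property $\mathbf{P_2}$ applied to the pair of arcs $(\pi(p_i), \pi(q_i))$ and $(\pi(p_j), \pi(q_j))$ directly yields $\pi(p_i) < \pi(p_j) \Leftrightarrow \pi(q_i) < \pi(q_j)$ for all $i, j$. Hence the subwords $\pi_{|P_1} = \pi(p_1)\cdots\pi(p_n)$ and $\pi_{|P_2} = \pi(q_1)\cdots\pi(q_n)$ share a common standardization $\sigma$, so by the combinatorial description of $\SHUFFLE$ given in Section~\ref{section:Shuffle product on permutations} the permutation $\pi$ appears in $\sigma \SHUFFLE \sigma$ and is therefore a square.
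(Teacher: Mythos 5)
The paper states Lemma~\ref{lemma:matching} without proof, so there is nothing to compare against line by line; your argument is correct and is evidently the intended one, since the six forbidden patterns of $\mathbf{P_1}$ are in exact bijection with the six interleavings of $\{p_i,p_j\}$ and $\{q_j,q_i\}$ arising from an order inversion between sources and targets, exactly as your case analysis shows, while $\mathbf{P_2}$ then transfers the value order between the two halves. Both directions check out, including the key monotonicity step in the converse.
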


Let $\pi$ be a permutation. For the sake of clarity, we will say that a
bunch of consecutive positions $P$ of $\pi$ is \emph{above} (resp.
\emph{below}) above another bunch of consecutive positions $P'$ in $\pi$
if $\pi(i) > \pi(j)$ (resp. $\pi(i) < \pi(j)$) for every $i \in P$ and
every $j \in P'$. For example, $\sigma_1$ is above $\sigma_2$ (in an
equivalent manner, $\sigma_2$ is below $\sigma_1$) in
Fig.~\ref{subfig:increasing before and above decreasing}, whereas
$\sigma_1$ is below $\sigma_2$ (in an equivalent manner, $\sigma_2$ is
above $\sigma_1$) in Fig.~\ref{subfig:decreasing before and below
increasing}.
\begin{figure}[t!]
  \centering
  \subfigure[%
    An increasing pattern before and above a decreasing pattern.
  ]{%
    \begin{tikzpicture}
      [
        scale=0.2,
        label/.style={anchor=base}
      ]
      \draw[step=1cm,black!30,ultra thin,fill=black!10] (-0.2,5.8) grid (5.2,11.2);
      \foreach \x/\y in {0/6,1/7,2/8,3/9,4/10,5/11} {
          \draw [fill=black] (\x,\y) circle (0.1);
      }
      \draw[step=1cm,black!30,ultra thin,fill=black!10] (5.8,-0.2) grid (11.2,5.2);
      \foreach \x/\y in {6/5,7/4,8/3,9/2,10/1,11/0} {
          \draw [fill=black] (\x,\y) circle (0.1);
      }
      \node [label] (a) at (1,4) {$a$};
      \node [label] (b) at (4,4) {$b$};
      \node [label] (ap) at (7,-2) {$a'$};
      \node [label] (bp) at (10,-2) {$b'$};
      \node (nu1) at (2.5,3) {$\sigma_1$};
      \node (nu2) at (8.5,-3) {$\sigma_2$};
      \draw [line width=1pt,->,>=latex']
      (1,7) .. controls +(0,12) and +(0,10) .. (7,4);
      \draw [line width=1pt,->,>=latex']
      (4,10) .. controls +(0,12) and +(0,10) .. (10,1);
    \end{tikzpicture}
    \;
    \begin{tikzpicture}
      [
        scale=0.2,
        label/.style={anchor=base}
      ]
      \draw[step=1cm,black!30,ultra thin,fill=black!10] (-0.2,5.8) grid (5.2,11.2);
      \foreach \x/\y in {0/6,1/7,2/8,3/9,4/10,5/11} {
          \draw [fill=black] (\x,\y) circle (0.1);
      }
      \draw[step=1cm,black!30,ultra thin,fill=black!10] (5.8,-0.2) grid (11.2,5.2);
      \foreach \x/\y in {6/5,7/4,8/3,9/2,10/1,11/0} {
          \draw [fill=black] (\x,\y) circle (0.1);
      }
      \node [label] (a) at (1,4) {$a'$};
      \node [label] (b) at (4,4) {$b'$};
      \node [label] (ap) at (7,-2) {$a$};
      \node [label] (bp) at (10,-2) {$b$};
      \node (nu1) at (2.5,3) {$\sigma_1$};
      \node (nu2) at (8.5,-3) {$\sigma_2$};
      \draw [line width=1pt,<-,>=latex']
      (1,7) .. controls +(0,12) and +(0,10) .. (7,4);
      \draw [line width=1pt,<-,>=latex']
      (4,10) .. controls +(0,12) and +(0,10) .. (10,1);
    \end{tikzpicture}
    \label{subfig:increasing before and above decreasing}
  }
  \qquad
  \subfigure[%
    A decreasing pattern before and below an increasing pattern.
  ]{%
  \begin{tikzpicture}
    [
      scale=0.2,
      label/.style={anchor=base}
    ]
    \draw[step=1cm,black!30,ultra thin,fill=black!10] (-0.2,-0.2) grid (5.2,5.2);
    \foreach \x/\y in {0/5,1/4,2/3,3/2,4/1,5/0} {
        \draw [fill=black] (\x,\y) circle (0.1);
    }
    \draw[step=1cm,black!30,ultra thin,fill=black!10] (5.8,5.8) grid (11.2,11.2);
    \foreach \x/\y in {6/6,7/7,8/8,9/9,10/10,11/11} {
        \draw [fill=black] (\x,\y) circle (0.1);
    }
    \node [label] (a) at (1,-2) {$a$};
    \node [label] (b) at (4,-2) {$b$};
    \node [label] (ap) at (7,4) {$a'$};
    \node [label] (bp) at (10,4) {$b'$};
    \node (nu1) at (2.5,-3) {$\sigma_1$};
    \node (nu2) at (8.5,3) {$\sigma_2$};
    \draw [line width=1pt,->,>=latex']
    (1,4) .. controls +(0,10) and +(0,12) .. (7,7);
    \draw [line width=1pt,->,>=latex']
    (4,1) .. controls +(0,10) and +(0,12) .. (10,10);
    \end{tikzpicture}
    \;
    \begin{tikzpicture}
      [
        scale=0.2,
        label/.style={anchor=base}
      ]
      \draw[step=1cm,black!30,ultra thin,fill=black!10] (-0.2,-0.2) grid (5.2,5.2);
      \foreach \x/\y in {0/5,1/4,2/3,3/2,4/1,5/0} {
          \draw [fill=black] (\x,\y) circle (0.1);
      }
      \draw[step=1cm,black!30,ultra thin,fill=black!10] (5.8,5.8) grid (11.2,11.2);
      \foreach \x/\y in {6/6,7/7,8/8,9/9,10/10,11/11} {
          \draw [fill=black] (\x,\y) circle (0.1);
      }
      \node [label] (a) at (1,-2) {$a'$};
      \node [label] (b) at (4,-2) {$b'$};
      \node [label] (ap) at (7,4) {$a$};
      \node [label] (bp) at (10,4) {$b$};
      \node (nu1) at (2.5,-3) {$\sigma_1$};
      \node (nu2) at (8.5,3) {$\sigma_2$};
      \draw [line width=1pt,<-,>=latex']
      (1,4) .. controls +(0,10) and +(0,12) .. (7,7);
      \draw [line width=1pt,<-,>=latex']
      (4,1) .. controls +(0,10) and +(0,12) .. (10,10);
      \end{tikzpicture}
    \label{subfig:decreasing before and below increasing}
  }
  \caption{\label{fig:subfig:no (nu_1, nu_2)-edge}%
    Illustration of Lemma~\ref{lemma:at most one edge monotone}.
  }
\end{figure}
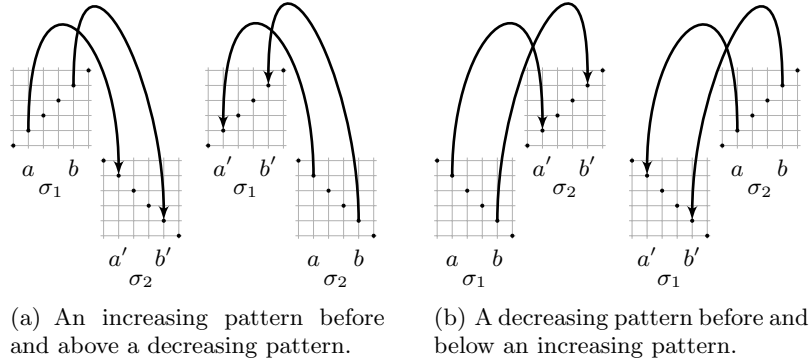

Before proving hardness, we give an easy lemma that will prove extremely
useful for simplifying the proof of upcoming
Proposition~\ref{proposition:hardness}.

\begin{lemma}
  \label{lemma:at most one edge monotone}
  Let $\pi = \pi_1 \, \sigma_1 \, \pi_2 \, \sigma_2 \, \pi_3$
  be a permutation with $|\sigma_1| \geq 2$ and $|\sigma_2| \geq 2$,
  and $\mathcal{M}$ be an oriented perfect matching on $\pi$
  satisfying~$\mathbf{P_1}$ and $\mathbf{P_2}$.
  The following assertions hold:
  \begin{enumerate}
    \item
    If $\sigma_1$ is increasing, $\sigma_2$ is decreasing, and
    $\sigma_1$ is above $\sigma_2$
    (see Fig.~\ref{subfig:increasing before and above decreasing}),
    then there is at most one arc between $\sigma_1$ and
    $\sigma_2$ in $\mathcal{M}$ (this arc can be a
    $(\sigma_1, \sigma_2)$-arc or a $(\sigma_2, \sigma_1)$-arc).
    \item
    If $\sigma_1$ is decreasing, $\sigma_2$ is increasing, and $\sigma_1$
    is below $\sigma_2$
    (see Fig.~\ref{subfig:decreasing before and below increasing}),
    then there is at most one arc between $\sigma_1$ and
    $\sigma_2$ in $\mathcal{M}$ (this arc can be a
    $(\sigma_1, \sigma_2)$-arc or a $(\sigma_2, \sigma_1)$-arc).
  \end{enumerate}
\end{lemma}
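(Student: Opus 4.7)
The plan is a case analysis by contradiction. Suppose there are two distinct arcs $\alpha_1, \alpha_2$ of $\mathcal{M}$, each having one endpoint in $\sigma_1$ and one in $\sigma_2$. Since $\mathcal{M}$ is an oriented matching, the four endpoints are distinct, so we may label the positions in $\sigma_1$ by $p_1 < p_2$ and in $\sigma_2$ by $q_1, q_2$, with arc $\alpha_i$ incident to $p_i$ and to some $q_{\tau(i)}$. In the linear order on positions of $\pi$, every $p$ precedes every $q$, so the undirected pattern formed by $\alpha_1$ and $\alpha_2$ is either a crossing (when $q_{\tau(1)} < q_{\tau(2)}$) or an inclusion (when $q_{\tau(1)} > q_{\tau(2)}$).

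The inclusion case is immediate: property $\mathbf{P_1}$ forbids all four oriented inclusion patterns $\InclusionLL$, $\InclusionLR$, $\InclusionRL$, $\InclusionRR$, irrespective of the orientations chosen for $\alpha_1$ and $\alpha_2$, giving a contradiction.

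For the crossing case, I split on the orientations. If $\alpha_1$ and $\alpha_2$ point in opposite directions (one from $\sigma_1$ to $\sigma_2$ and the other from $\sigma_2$ to $\sigma_1$), the four positions form a crossing with oppositely oriented arrows, that is, an occurrence of $\CrossingLR$ or $\CrossingRL$, both forbidden by $\mathbf{P_1}$. If $\alpha_1$ and $\alpha_2$ are cooriented (both from $\sigma_1$ to $\sigma_2$, or both from $\sigma_2$ to $\sigma_1$), I use $\mathbf{P_2}$: in Case~1, $p_1 < p_2$ together with $\sigma_1$ increasing forces $\pi(p_1) < \pi(p_2)$, while $q_{\tau(1)} < q_{\tau(2)}$ together with $\sigma_2$ decreasing forces $\pi(q_{\tau(1)}) > \pi(q_{\tau(2)})$; reading the tails and heads of the two cooriented arcs, this directly violates the equivalence required by $\mathbf{P_2}$. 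Case~2 is handled by the symmetric argument, with $\sigma_1$ decreasing and $\sigma_2$ increasing reversing the inequalities in the analogous way.

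The argument is not really an obstacle, just a careful bookkeeping exercise. The one point to be attentive to is that $\mathbf{P_1}$ forbids only two of the four oriented crossing patterns (namely the opposite-direction ones), so the cooriented crossings cannot be eliminated by $\mathbf{P_1}$ alone; this is precisely where the monotonicity hypotheses of the lemma enter, through $\mathbf{P_2}$, to close the last case. The hypotheses $|\sigma_1|, |\sigma_2| \geq 2$ are used implicitly to guarantee that one can have two distinct arcs with both endpoints inside $\sigma_1$ and $\sigma_2$ respectively.
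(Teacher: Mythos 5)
Your proof is correct and complete. The paper itself gives no proof of this lemma (it is only asserted to be ``easy'' before Proposition~\ref{proposition:hardness}), so there is nothing to compare against; your case analysis --- inclusion patterns killed outright by $\mathbf{P_1}$, anti-parallel crossings killed by the $\CrossingLR$/$\CrossingRL$ clauses of $\mathbf{P_1}$, and the remaining cooriented crossings killed by $\mathbf{P_2}$ via the opposite monotonicity of $\sigma_1$ and $\sigma_2$ --- is exactly the intended argument, and you correctly flag the one non-trivial point, namely that $\mathbf{P_1}$ alone does not exclude cooriented crossings. One small observation: your argument never actually invokes the hypothesis that $\sigma_1$ is above (resp.\ below) $\sigma_2$; only the opposite monotonicity of the two blocks is used in the $\mathbf{P_2}$ step. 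That is not an error --- the lemma simply holds under weaker hypotheses than stated, and the above/below condition reflects the configurations arising in the reduction --- but it is worth being aware that this assumption is idle in the proof.
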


\begin{proposition}
  \label{proposition:hardness}
  Deciding whether a permutation is a square is \NPC.
\end{proposition}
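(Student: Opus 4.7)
The plan is to first show membership in \NPclass, which is immediate: a certificate is a pair $(P_1, P_2)$ of disjoint subsets of $[|\pi|]$ with $P_1 \sqcup P_2 = [|\pi|]$ such that $\STD(\pi_{|P_1}) = \STD(\pi_{|P_2})$, and this is verifiable in polynomial time.

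For \NPhardness, I would reduce from the pattern involvement problem of Bose, Buss, and Lubiw~\cite{Bose:Buss:Lubiw:1998}: given permutations $\sigma$ and $\tau$, decide whether $\sigma$ occurs as a pattern in $\tau$. From such an instance $(\sigma, \tau)$ with $|\sigma| = k$ and $|\tau| = n$, I would construct in polynomial time a permutation $\pi = \pi(\sigma, \tau)$ that is a square if and only if $\sigma$ is a pattern of $\tau$. The forward direction (an occurrence yields a square) should be transparent by construction: an embedding of $\sigma$ into $\tau$ directly produces a partition $P_1 \sqcup P_2 = [|\pi|]$ with $\STD(\pi_{|P_1}) = \STD(\pi_{|P_2})$ by pairing the designated $\sigma$-block with its image in the $\tau$-block and completing the partition through a prearranged canonical matching of the remaining positions.

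The interesting direction is the converse, which I would attack via Lemma~\ref{lemma:matching} to rephrase "$\pi$ is a square" as "there exists an oriented perfect matching $\mathcal{M}$ on $\pi$ satisfying $\mathbf{P_1}$ and $\mathbf{P_2}$''. The construction of $\pi$ would contain a distinguished $\sigma$-block, a distinguished $\tau$-block, and a battery of monotone \emph{filler} blocks placed so that every pair of filler blocks (and every pair consisting of a filler block together with the $\sigma$-block or the $\tau$-block) realizes one of the two configurations of Lemma~\ref{lemma:at most one edge monotone}, that is, an increasing run above-and-before a decreasing run, or a decreasing run below-and-before an increasing run. That lemma then bounds the number of arcs of $\mathcal{M}$ crossing between any two such blocks by one. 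A global counting argument then forces the filler blocks to be matched entirely among themselves in a unique canonical pairing, leaving only the $\sigma$-block and exactly $k$ residual positions inside the $\tau$-block available to be matched across. Finally, property $\mathbf{P_2}$ applied to these $k$ cross-arcs forces the $k$ residual letters of $\tau$ to be order-isomorphic to $\sigma$, i.e., to be an occurrence of $\sigma$ in $\tau$.

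The main obstacle is the quantitative design of the filler blocks: their sizes, horizontal positions, and vertical ranges must be chosen simultaneously so that (i) every pair of distinct filler blocks falls into one of the two monotone configurations of Lemma~\ref{lemma:at most one edge monotone}; (ii) each filler block also forms such a configuration with both the $\sigma$-block and the $\tau$-block, ruling out "wasteful" cross-arcs between a filler block and the $\sigma$/$\tau$-blocks; and (iii) a pigeonhole counting, tight against the one-arc bound from the lemma, leaves exactly $k$ positions inside $\tau$ unmatched by filler. Balancing these three geometric constraints inside a single polynomial-size permutation, while keeping the $\sigma$-block "rigid" enough that property $\mathbf{P_2}$ translates the surviving matching into an order-isomorphic embedding, is where the technical heart of the proof lies; once this engineering is in place, combining Lemmas~\ref{lemma:matching} and~\ref{lemma:at most one edge monotone} with property $\mathbf{P_2}$ completes the reduction.
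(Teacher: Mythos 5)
Your overall strategy --- membership in \NPclass via the partition certificate, hardness via a reduction from the pattern involvement problem, and correctness of the reverse direction via Lemma~\ref{lemma:matching} together with the one-arc bound of Lemma~\ref{lemma:at most one edge monotone} applied to monotone filler blocks --- is exactly the paper's. But the proposal stops where the proof actually lives: you never construct the gadget, and the architecture you do sketch does not work as described. With a single $\sigma$-block and a single $\tau$-block the arithmetic fails: once the fillers are paired among themselves and the $k$ positions of the $\sigma$-block are matched into the $\tau$-block, the remaining $n-k$ positions of the $\tau$-block have no partner (and one cannot let them be matched among themselves, since that would require the leftover subpermutation of $\tau$ to be a square, which is not controllable). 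Your sketch is internally inconsistent on this point: you assert both that the fillers ``are matched entirely among themselves'' and that they leave ``exactly $k$ positions inside $\tau$ unmatched by filler,'' i.e., absorb the other $n-k$. The paper resolves this by duplicating \emph{both} input permutations: the constructed permutation $\mu$ contains four non-monotone blocks $\sigma'$, $\pi'$, $\pi''$, $\sigma''$ (padded and shifted copies of $\sigma$ and $\pi$), and the intended matching chains them as $\sigma' \to \pi'$ (the occurrence), $\pi' \to \pi''$ (the $n-k$ leftover letters matched to their identical copies), and $\pi'' \to \sigma''$ (the occurrence again); the total length $2n+2k+4$ of these blocks is even and the fillers pair off exactly.

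A second difficulty is that Lemma~\ref{lemma:at most one edge monotone} requires \emph{both} blocks to be monotone and in one of two specific relative positions; it cannot be invoked, as your item \emph{(ii)} does, to bound the arcs between a filler block and the arbitrary, non-monotone $\sigma$- or $\tau$-blocks. The paper's control over those interactions comes instead from the sizes of the fillers: the lengths $N_1, N_2, N_3, N_4$ are chosen odd and each one larger than twice the sum of all the smaller ones plus the total size of the non-filler blocks, so that a counting argument forces each $\nu_i$ to be matched almost entirely to its partner $\nu'_i$ and bounds the leakage into the small blocks. These two devices --- the fourfold duplication of the inputs and the odd, super-increasing filler sizes --- are the substance of the reduction, and without them the proposal remains a plan rather than a proof.
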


\begin{proof}[of Proposition~\ref{proposition:hardness}]
  The problem is certainly in $\NPclass$. We propose a reduction from
  the pattern involvement problem which is known to be
  \NPC~\cite{Bose:Buss:Lubiw:1998}: Given two permutations $\pi$ and
  $\sigma$, decide whether $\sigma$ occurs in $\pi$ (as an
  order-isomorphic pattern).

    \begin{figure}[t!]
    \centering
    \begin{tikzpicture}[
      scale=.9,
      >=stealth',
      shorten >=1pt,
      main node/.style={align=center},
      cell/.style={draw,ultra thick,fill=black!5},
      structure link/.style={line width=1.5pt},
      pattern link/.style={line width=1.5pt,black!20},
      monotone/.style={->,thick}
      ]
      \draw [pattern link,->,>=latex'] (8.5,6) .. controls +(0,1) and +(0,2) .. (10.4,5);
      \draw [pattern link,->,>=latex'] (10.6,5) .. controls +(0,1) and +(0,3) .. (11.5,4);
      \draw [structure link,->,>=latex'] (0.5,11) .. controls +(0,2) and +(0,1) .. (2.5,12);
      \draw [structure link,->,>=latex'] (1.5,2) .. controls +(0,1) and +(0,3) .. (6.5,1);
      \draw [structure link,->,>=latex'] (3.5,9) .. controls +(0,3) and +(0,1.5) .. (7.5,10);
      \draw [structure link,->,>=latex'] (5.5,8) .. controls +(0,1) and +(0,7) .. (9.5,3);
      \draw [structure link,->,>=latex'] (4.5,6) .. controls +(0,-2) and +(0,-1) .. (8.5,5);
      \draw [cell] (0,10) -- (1,10) -- (1,11) -- (0,11) -- cycle;
      \draw [monotone] (0,10) -- ++(1,1) node [midway,fill=white,fill=black!5] {$\nu_1$};
      \draw [cell] (2,11) -- (3,11) -- (3,12) -- (2,12) -- cycle;
      \draw [monotone] (2,11) -- ++(1,1) node [midway,fill=white,fill=black!5] {$\nu'_1$};
      \draw [cell] (1,1) -- (2,1) -- (2,2) -- (1,2) -- cycle;
      \draw [monotone] (1,2) -- ++(1,-1) node [midway,fill=white,fill=black!5] {$\nu_2$};
      \draw [cell] (6,0) -- (7,0) -- (7,1) -- (6,1) -- cycle;
      \draw [monotone] (6,1) -- ++(1,-1) node [midway,fill=white,fill=black!5] {$\nu'_2$};
      \draw [cell] (3,8) -- (4,8) -- (4,9) -- (3,9) -- cycle;
      \draw [monotone] (3,8) -- ++(1,1) node [midway,fill=white,fill=black!5] {$\nu_3$};
      \draw [cell] (7,9) -- (8,9) -- (8,10) -- (7,10) -- cycle;
      \draw [monotone] (7,9) -- ++(1,1) node [midway,fill=white,fill=black!5] {$\nu'_3$};
      \draw [cell] (5,7) -- (6,7) -- (6,8) -- (5,8) -- cycle;
      \draw [monotone] (5,8) -- ++(1,-1) node [midway,fill=white,fill=black!5] {$\nu_4$};
      \draw [cell] (9,2) -- (10,2) -- (10,3) -- (9,3) -- cycle;
      \draw [monotone] (9,3) -- ++(1,-1) node [midway,fill=white,fill=black!5] {$\nu'_4$};
      \draw [cell] (4,6) -- (5,6) -- (5,7) -- (4,7) -- cycle;
      \node [main node] (sigma1) at (4.5,6.5) {$\sigma'$};
      \draw [cell] (8,5) -- (9,5) -- (9,6) -- (8,6) -- cycle;
      \node [main node] (pi1) at (8.5,5.5) {$\pi'$};
      \draw [cell] (10,4) -- (11,4) -- (11,5) -- (10,5) -- cycle;
      \node [main node] (sigma2) at (10.5,4.5) {$\pi''$};
      \draw [cell] (11,3) -- (12,3) -- (12,4) -- (11,4) -- cycle;
      \node [main node] (pi2) at (11.5,3.5) {$\sigma''$};
      \draw[step=1cm,black!30,ultra thin,fill=black!10] (0,0) grid (12,12);
      \foreach \y/\N in {0.5/N_2,1.5/N_2,2.5/N_4,3.5/k,4.5/n,5.5/n+2,6.5/k+2,7.5/N_4,8.5/N_3,9.5/N_3,10.5/N_1,11.4/N_1} {
          \node at (-0.5,\y) (R\y) {$\N$};
      }
      \foreach \x/\N in
      {0.5/N_1,1.5/N_2,2.5/N_1,3.5/N_3,4.5/k+2,5.5/N_4,6.5/N_2,7.5/N_3,8.5/n+2,9.5/N_4,10.5/n,11.5/k} {
          \node at (\x,-0.5) (C\x) {$\N$};
      }
    \end{tikzpicture}
    \caption{\label{fig:reduction}%
      Schematic representation of the permutation $\mu$ used
      in Proposition~\ref{proposition:hardness}.
      Black arcs denote the presence of at least one arc between two bunches of
      positions in $\mu$.
      Grey arcs denote edges that are only considered in the forward direction of the
      proof.
    }%
  \end{figure}
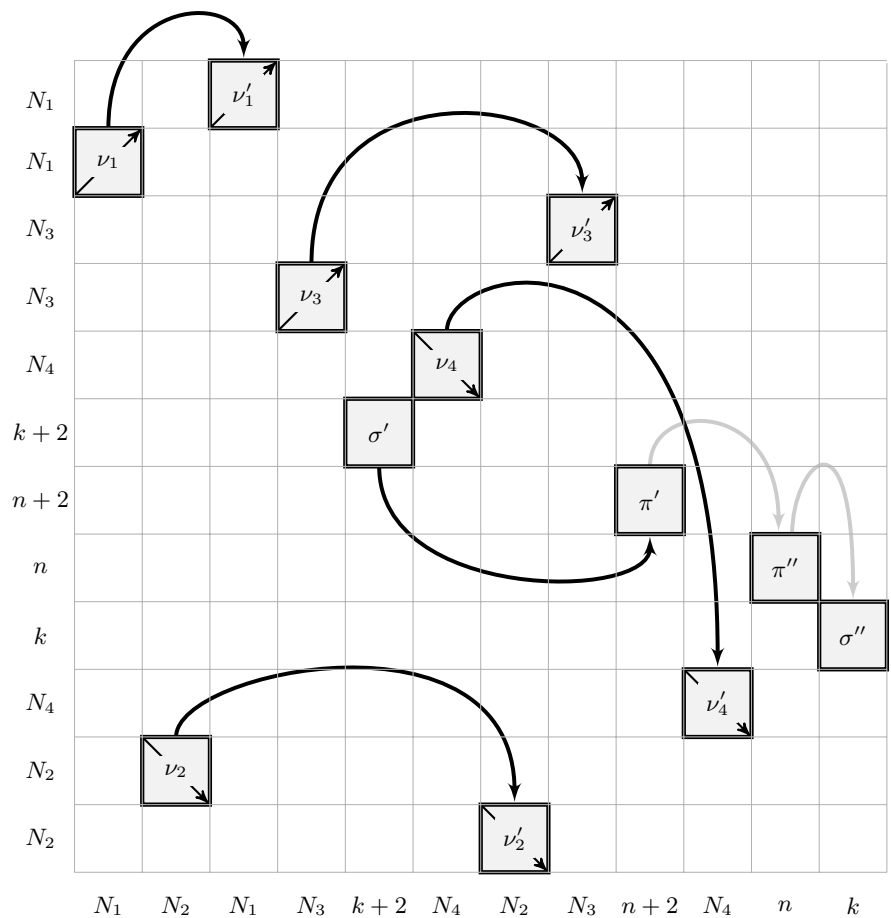

  Let $\pi \in S_n$ and $\sigma \in S_k$ be two arbitrary permutations.
  Define
  \begin{align*}
  N_4 &= 2(2n + 2k + 4) + 1  = 4n + 4k + 9 \\
  N_3 &= 2(2N_4 + 2n + 2k + 4) + 1 = 20n + 20k + 45 \\
  N_2 &= 2(2N_3 + 2N_4 + 2n + 2k + 4) + 1 = 100n + 100k + 225 \\
  N_1 &= 2(2N_2 + 2N_3 + 2N_4 + 2n + 2k + 4) + 1 = 1000n + 1000k  + 1325\text{.}
  \end{align*}
  Notice that $N_1, N_2, N_3$ and $N_4$ are polynomial in $n$.
  The crucial properties are that
  (i) $N_1, N_2, N_3$ and $N_4$ are odd integers
  and
  (ii) $N_i > \left(\sum_{i < j \leq 4} 2N_j\right) + 2n + 2k + k$
  for every $1 \leq i \leq 4$.

  We now turn to defining various gadgets (sequences of integers)
  that act as building blocks in our construction of a new permutation $\mu$:
  \begin{align*}
  \sigma'  &= ((k+1) \; \sigma \; (k+2)) \; [2N_2 + N_4 + 2n + k + 2] \\
  \pi'     &= ((n+1) \; \pi \; (n+2)) \; [2N_2 + N_4 + n + k] \\
  \sigma'' &= \sigma \; [2N_2 + N_4] \\
  \pi''    &= \pi \; [2N_2 + N_4 + k] \\
  \nu_1    &= \nearrow_{N_1} \; [2N_2 + 2N_3 + 2N_4 + 2n + 2k + 4] \\
  \nu'_1   &= \nearrow_{N_1} \; [N_1 + 2N_2 + 2N_3 + 2N_4 + 2n + 2k + 4] \\
  \nu_2    &= \nearrow_{N_2} \; [N_2] \\
  \nu'_2   &= \searrow_{N_2} \\
  \nu_3    &= \nearrow_{N_3} \; [2N_2 + 2N_4 + 2n + 2k + 4] \\
  \nu'_3   &= \nearrow_{N_3} \; [2N_2 + N_3 + 2N_4 + 2n + 2k + 4] \\
  \nu_4    &= \searrow_{N_4} \; [2N_2 + N_4 + 2n + 2k + 4] \\
  \nu'_4   &= \searrow_{N_4} \; [2N_2]\text{.}
  \end{align*}
  We are now in position to define our target permutation $\mu$
  (see Fig.~\ref{fig:reduction} for an illustration):
  $$
  \mu
  =
  \nu_1 \; \nu_2 \; \nu'_1 \; \nu_3 \; \sigma' \; \nu_4 \; \nu'_2 \; \nu'_3 \; \pi' \; \nu'_4 \; \pi'' \; \sigma''
  \text{.}
  $$

  It is immediate that $\mu$ can be constructed in polynomial-time in $n$ and $k$.
  It can be shown
  that $\sigma$ occurs in $\pi$ if and only if
  there exists an oriented perfect matching $\mathcal{M}$ on $\mu$
  satisfying $\mathbf{P_1}$ and $\mathbf{P_2}$.
  \qed
\end{proof}

\section{Conclusion}
\label{section:Conclusion}

There are a number of further directions of investigation in this
general subject. They cover several areas: algorithmic, combinatorics,
and algebra. Let us mention several -~not necessarily new~- open
problems that are, in our opinion, the most interesting. How many
permutations of $S_{2n}$ are squares? How many $(213,231)$-avoiding
permutations of $S_{2n}$ are squares? (Equivalently, by
Proposition~\ref{prop:bijection_binary_to_permutations_squares},
how many binary strings of length $2n$ are squares; see also Problem~4
in \cite{Henshall:Rampersad:Shallit:2011})? How hard is the problem of
deciding whether a $(213,231)$-avoiding permutation is a square
(Problem~4 in \cite{Henshall:Rampersad:Shallit:2011},
see also \cite{Buss:Soltys:2014,Rizzi:Vialette:CSR:2013})?
Given two permutations $\pi$ and~$\sigma$, how hard is the problem of
deciding whether $\sigma$ is a square root of~$\pi$?
As for algebra, one can ask for a complete algebraic study of
$\QQ[S]$ as a graded associative algebra for the shuffle  product
$\SHUFFLE$. Describing a generating family for $\QQ[S]$, defining
multiplicative bases of $\QQ[S]$, and determining whether $\QQ[S]$ is
free as an associative algebra are worthwhile questions.



\bibliographystyle{plain}
\bibliography{biblio}



\end{document}